\documentclass[10pt, conference, a4paper]{IEEEtran}\IEEEoverridecommandlockouts
\usepackage{amsthm}
\usepackage{amssymb}
\usepackage{amsmath}
\usepackage{amsfonts}
\usepackage{graphicx}
\usepackage{algorithm}
\usepackage{algorithmic}
\usepackage{epstopdf}
\usepackage{cite}
\usepackage{hyperref}
\usepackage{subfigure}

\newtheorem{lemma}{\textbf{Lemma}}

\newtheorem{proposition}{\textbf{Proposition}}

\begin{document}

\title{Optimal Task Scheduling in Communication-Constrained Mobile Edge Computing Systems for Wireless Virtual Reality}
\author{Xiao Yang$^*$, Zhiyong Chen$^*$, Kuikui Li$^*$, Yaping Sun$^*$ and Hongming Zheng$^\dagger$\\
$^*$Cooperative Medianet Innovation Center,  Shanghai Jiao Tong University,  Shanghai,  P. R. China\\
$\dagger$Intel Corporation, P.R. China\\
Email: {\{yangxiao1652,  zhiyongchen, kuikuili, yapingsun\}@sjtu.edu.cn}, hongming.zheng@intel.com}
\maketitle

\begin{abstract}

Mobile edge computing (MEC) is expected to be an effective solution to deliver 360-degree virtual reality (VR) videos over wireless networks. In contrast to previous computation-constrained MEC framework, which reduces the computation-resource consumption at the mobile VR device by increasing the communication-resource consumption, we develop a communications-constrained MEC framework to reduce communication-resource consumption by increasing the computation-resource consumption and exploiting the caching resources at the mobile VR device in this paper. Specifically, according to the task modularization, the MEC server can only deliver the components which have not been stored in the VR device, and then the VR device uses the received components and the corresponding cached components to construct the task, resulting in low communication-resource consumption but high delay. The MEC server can also compute the task by itself to reduce the delay, however, it consumes more communication-resource due to the delivery of entire task. Therefore, we then propose a task scheduling strategy to decide which computation model should the MEC server operates, in order to minimize the communication-resource consumption under the delay constraint. Finally, we discuss the tradeoffs between communications, computing, and caching in the proposed system.
\end{abstract}

\section{Introduction}
Virtual reality (VR) over wireless networks is gaining an unprecedented attention due to the ability of bringing immersive experience to users. VR application is computational-intensive, communications-intensive and delay-sensitive, leading to the fact that most of
VR are wired with cables. Current wireless systems (e.g., LTE) cannot cope with the ultra-low latency and ultra-high throughput requirements of wireless VR application (e.g., 360-degree VR video) \cite{VR_Magazine}.

To address this problem, an effective solution is mobile edge computing (MEC)\cite{3C, MEC_BUPT}, which enables cloud computing capabilities at the edge of wireless network, e.g., base stations (BS). By deploying computation resource at the network edge, MEC performs the computation tasks closer to the VR device, improving the quality of computation experience.
\begin{figure}
\centering
\includegraphics[width=3.3in]{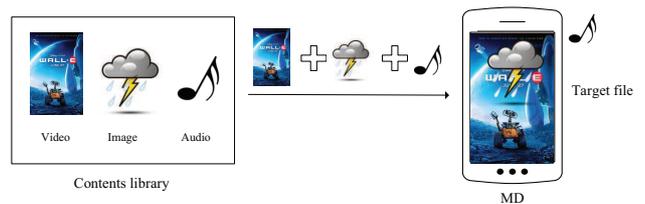}
\caption{Task can be modularized and then recovered at the user \cite{MMT,VR360}.}\label{task}
\vspace{-4.5mm}
\end{figure}

\textbf{Computation-Constrained MEC}, where the VR user can offload the computation tasks from the VR device to the MEC server due to the limited computation capability at the VR device, has attracted significant attention recently\cite{MECpowerdelay2, MEC_TCOM, huangkaibin2}. In \cite{MECpowerdelay2},  a dynamic computation scheduling algorithm based on Lyapunov theory was proposed to minimize the execution delay and task failure. Similarly, \cite{MEC_TCOM} proposed an optimization offloading framework to minimize the execution latency and the user's energy consumption. In order to minimize the mobile energy consumption under the constraint of execution delay, \cite{huangkaibin2} optimized the offloading policy based on the channel gains and the consumption of local computing energy. In general, the computation-constrained MEC exploits the communication resource to reduce the requirement of the user's computation resource. Therefore, it is quite suitable for the computational-intensive and delay-sensitive application with low bandwidth consumption, e.g., online picture processing.

The video data rate on online 360-degree VR videos is already many times greater than that of a high definition TV, even though 5G networks maybe not able to satisfy this requirement \cite{VR_Magazine}. Meanwhile, the end-to-end latency through the system is no more than $20~ms$ for online 360-degree VR videos. We can also use the MEC solution to improve network responsiveness and reduce latency, however, we cannot try to reduce the computation resource by increasing the communications resource overhead. In contrast, we should try to reduce the consumption of communications resource by taking advantage of the computation and caching resources at the mobile VR devices, where we term this solution as \textbf{Communications-Constrained MEC} in this paper.

In this paper, we present a communications-constrained MEC framework and develop an optimal task scheduling policy to minimize the average transmission data per task in this system, where the average transmission data per task is adopted as the consumption of communications resource metric. Our major contributions are summarized as follows:
\begin{itemize}
  \item \emph{We propose a communications-constrained MEC framework to reduce the consumption of communications resource by exploiting the caching resource and increasing the consumption of computing resource at the mobile VR device}. To address this issue, we consider that a 360-degree VR video (task) can be modularized today \cite{MMT,VR360}, e.g., MPEG Media Transport (MMT) standard, as shown in Fig. \ref{task}. Besides, the contents composing each whole task have popularity. As a result, for one intended task, the MEC server only delivers the corresponding components which have not been stored in the VR device, and then the VR device uses the received components and the corresponding cached components to construct the task by exploiting the local computation resource.
  \item \emph{We develop an optimal task scheduling policy to minimize the average transmission data per task}. Of course, the MEC server can also select MEC computation model, which the MEC server combines all corresponding contents as the target task and then deliver the task to the VR device. The MEC computation mode is a reliable way to reduce the latency due to the fast computation capability at the MEC server, but this model delivers more data per task to the user. Therefore, we formulate the transmission data consumption minimization problem under the delay constraint and propose a task scheduling strategy based on Lyapunov theory.
  \item \emph{We discuss the tradeoffs between communications, computing, and caching}. We present how to joint allocate communications, computing and caching resources in the proposed communications-constrained MEC system to achieve a target end-to-end latency, e.g., $20~ms$.
\end{itemize}

\section{System Model}
As shown in Fig.\ref{System}, we consider an MEC system, where a cache-enabled mobile VR device can access BS with an MEC server to obtain task. The MEC server has an abundance of computing and caching resource, while the mobile VR device has limited computing ability and cache capacity.
\subsection{Task Model}

\begin{figure}
\centering
\includegraphics[width=3.3in, height=1.5in]{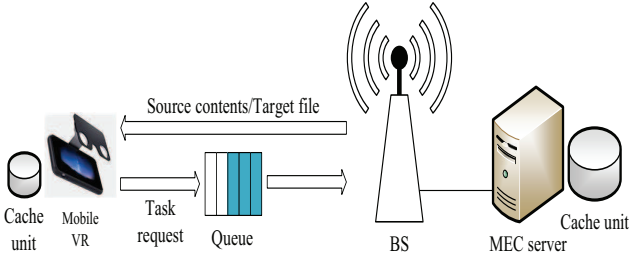}
\caption{The proposed communication-constrained MEC system with a caching enabled mobile VR device}\label{System}
\vspace{-4.5mm}
\end{figure}
We consider each task consists of a number of contents, e.g., MMT assets. All the contents composing each task come from a set of $N$ possible contents, which is denoted by $\mathbb{F}=\{F_1, F_2......F_N \}$. Note that one content may be used more than once in a task. The popularity distribution of the contents is denoted by $\mathbf{p}=[p_1,\ldots,p_N]$, where $\sum\limits_{i=1}^{N}p_i=1$. We assume all the contents are of equal size $\tau$ and the MEC server has all $N$ contents. The cache capacity of mobile VR device is $M$ with $M<N$, which can store at most $M$ contents. We adopt the most popular caching strategy and the stored content set can be denoted as $\mathbb{M}=\{F_1, F_2......F_M \}$.

The system is time-slotted with the time slot length $\Delta$. Let $H_t$ be the task scheduled at time slot $t$, which consists of $K_t$ contents. We denote $H(t)=[h_{1}(t),\ldots,h_{K_t}(t)]$ as the content index vector of the task $H_t$, where $h_{k_t}(t) \in\{1,\ldots,N\}$ indicates that the $k_t$-th content in $H_t$ is $F_{h_{k_t}(t)}$. Thus the size of $H_t$ is $D(t)=\tau{K_t}$. Let $G_n^t(1\le n \le N)$ denote whether $F_n\in \mathbb{F}$ is requested in $H_t$ and not cached in mobile VR device, which can be given by
\begin{equation}
G_n^t= \\
 \begin{cases}
 1,~~~~ \text{for}~ M+1\le n\le N, ~\text{and} ~h_{k_t}(t)=n\\
 0, ~~~~\text{otherwise}.
 \end{cases}
\end{equation}

\subsection{Computation Model}
Each task can either be executed at the MEC server, or at the mobile VR device. Let $W$ denote the required CPU cycles for computing one bit. The CPU frequency of the MEC server and the mobile VR device is $f_c$ and $f_l$, respectively. The wireless transmission throughput is $R$.
\subsubsection{MEC Computation Mode}
When the mobile VR device has a task request, the MEC server computes the contents as the task, and then deliver the task. The size of transmission data is $D_{ct} (t)=D(t)$. The size of computation data is $D_{cc}(t)=D(t)$. Therefore, $N_c (t)=\lceil D_{cc} (t)W/(f_c \Delta)+ D_{ct}(t)/(R \Delta)\rceil$ time slots are required to complete the task $H_t$. Similar to \cite{LiuJ}, we use $S_c (t)\in\{0, 1, ..., N_c (t)-1\}$ to model the MEC state, where $S_c (t)=0$ means that the MEC server is idle and $S_c (t)=n~(n\neq 0)$ indicates a task is processing at the MEC server and $N_c(t)-n$ time slots are required to complete the computation.
\subsubsection{Local Computation Mode}
For this model, the MEC server delivers the contents which do not been stored in the mobile VR device, and the mobile VR device computes the received contents with the cached contents as the task. Because the MEC server does not transmit the stored contents of the task, the size of transmission data is $D_{lt}(t)=\tau\sum\limits_{n=M+1}^{N}G_n^t$. The size of computation data also is $D_{lc}(t)=D(t)$. As a result, $N_l (t)=\lceil D_{lc}(t)W/(f_l \Delta)+ D_{lt}(t)/(R \Delta)\rceil$ time slots are required to complete $H_t$. Similarly, let $S_l (t)\in \{0, 1,...,N_l(t)-1\}$ denote the mobile VR device state. If the system allocates a task to the mobile VR device at time slot $t$, $S_l (t)$ updates to $S_l (t+1)=N_l (t)-1$.   
\subsection{Task Queueing Model}
The task arrival process is modeled as a Bernoulli process with probability $\lambda$. When a task arrivals, the task first enters into a task queue with infinite capacity. The number of the task waiting in the queue is the queue state $Q(t)=\{0, 1, 2, 3......\}$, where $Q(t)$ updates according to the following equation
\begin{equation}\label{queue}
  Q (t+1){\setlength\arraycolsep{0.3pt}=}(Q(t)-(u_l^1(t)+u_l^2(t)+u_c^1(t)+u_c^2(t)))^++A(t),
\end{equation}
where $A(t)$ denotes whether a task arriving in the time slot $t$. Thus we have $\Pr\{A(t)=1\}=\lambda$ and $\Pr\{A(t)=0\}=1-\lambda$. Here, $\{u_l^1(t), u_l^2(t),u_c^1(t),u_c^2(t)\}$ denotes the task scheduling decision at time slot $t$.

Notice that at most two tasks can be scheduled at a time slot. The first task should be scheduled before the second task. If the first task is scheduled to do the computation at the mobile VR device (MEC server), we have $u_l^1(t)=1~(u_c^1(t)=1)$ and the second task can not be scheduled to operate in the local computation mode (MEC computation mode) because the CPU has been occupied by the first task, yielding $u_l^2=0~(u_c^2(t)=0)$. Otherwise, we have $u_l^1(t)=0~(u_c^1(t)=0)$ for the first task and the local computation mode (MEC computation mode) could be scheduled for the second task, i.e., $u_l^2=1~(u_c^2(t)=1)$. As a result, there are five possible states for the task scheduling decision in each time slot, i.e., $\{u_l^1(t), u_l^2(t),u_c^1(t),u_c^2(t)\}=\{(0,0,0,0), (1,0,0,0), (0,0,1,0), (1,0,0,1),(0,1,1,0)\}$.

\section{Task Scheduling Strategy and Problem Formulation}
The MEC server is a reliable way to reduce the computation latency due to $f_c\geq f_l$, but consumes more communication resource due to $D_c (t)\geq D_l (t)$. Hence, the MEC server needs to make the task scheduling decision at each time slot to minimize the average transmission data per task under the average delay constraint.

\subsection{Task Scheduling Strategy}
When the MEC server (mobile VR device) is idle, the task can be scheduled to the MEC computation mode (local computation mode). The queue state $Q(t)=0$ denotes the task queue is empty and there is no task will be scheduled in time slot $t+1$, while $Q(t)=\infty$ indicates that there are infinite tasks in the task queue, yielding the unstable system. According to $Q(t)$, $S_l (t)$, and $S_c (t)$, we can describe the system state.

\textbf{Case 1}: $S_l (t)=S_c (t)=0$.
Both the mobile VR device and the MEC server are idle. The system can process at most two tasks. If there are two tasks in $Q(t)$ at least, i.e., $Q(t)\ge 2$, one task can be processed in the mobile VR device (MEC server) and the other task remains wait in the task queue or to be precessed in the MEC server (mobile VR device). The task scheduling policy can be expressed as the follow:
\begin{equation}\label{case1}
 u^1(t) = \\
 \begin{cases}
 (0,0,0,0)\\
 (0,0,1,0)\\
 (1,0,0,0)\\
 (1,0,0,1) \\
 (0,1,1,0)
 \end{cases} \text{for}~Q(t)\ge 2.
\end{equation}

If there is only one task in $Q(t)$, the task can be processed in the mobile VR device, the MEC server or remains wait in the task queue. We thus have
\begin{equation}\label{case2}
 u^2(t) = \\
 \begin{cases}
 (0,0,0,0)\\
 (0,0,1,0)\\
 (1,0,0,0)
 \end{cases}  \text{for}~Q(t)=1.
\end{equation}

\textbf{Case 2}:  $S_l (t)\neq 0, S_c (t)=0$.                
In this case, the MEC server is idle and the mobile VR device is busy so that the system can process one task at most for the MEC server. The task scheduling policy is:
\begin{equation}\label{case3}
 u^3(t) = \\
 \begin{cases}
 (0,0,0,0)\\
 (0,0,1,0)\\
 \end{cases}\text{for}~Q(t)\ge 1.
\end{equation}

\textbf{Case 3}: $S_l (t)=0, S_c (t)\neq 0$.
In this case, the mobile VR device operates in idle mode and the MEC server is occupied. Only one task can be scheduled for the mobile VR device. The task scheduling policy can be expressed as following:
\begin{equation}\label{case4}
 u^4(t) = \\
 \begin{cases}
 (0,0,0,0)\\
 (1,0,0,0)\\
 \end{cases}\text{for}~Q(t)\ge 1.
\end{equation}

\textbf{Case 4}: $S_c (t)\neq 0,  S_l (t)\neq 0$ or $Q(t)=0$.
If both the MEC server and mobile device are busy, i.e., ($S_c (t)\neq 0,  S_l (t)\neq 0$), or there is no task in the task queue $Q(t)=0$, no task is scheduled. We then have
\begin{equation}\label{case5}
  u^5(t) =(0,0,0,0).
\end{equation}

At the time slot $t$,  $S_l (t)$ and $S_c (t)$ can be expressed as:
\begin{equation}
 S_l(t+1) = \\
 \begin{cases}
 \max (S_l(t){\setlength\arraycolsep{0.3pt}-}1, 0) &u_l^1(t)=0 ~\text{or}~ u_l^2(t){\setlength\arraycolsep{0.3pt}=}0,\\
 N_l^1(t)-1   &u_l^1(t)=1, \\
 N_l^2(t)-1   &u_l^2(t)=1. \label{local state}
 \end{cases}
\end{equation}

\begin{equation}
 S_c(t{\setlength\arraycolsep{0.3pt}+}1) {\setlength\arraycolsep{0.3pt}=} \\
 \begin{cases}
 \max (S_c(t){\setlength\arraycolsep{0.3pt}-}1, 0) &u_c^1(t){\setlength\arraycolsep{0.3pt}=}0 ~\text{or} ~u_c^2(t){\setlength\arraycolsep{0.3pt}=}0,\\
 N_c^1(t)-1   &u_c^1(t)=1, \label{mec state}\\
 N_c^2(t)-1   &u_c^2(t)=1.
 \end{cases}
\end{equation}
where $N_l^i(t)$ and $N_c^i (t)$ denote $N_l(t)$ and $N_c(t)$ of the $i$-th task for $i=1,2$, respectively. $S_l (t+1)=N_l^i (t)-1$ means the mobile VR device is occupied by a task in the time slot $t+1$ and will be busy in the follow $N_l^i (t)-1$ time slots. Similarly, we have $S_c (t+1)=N_c^i(t)-1$. 
\subsection{Problem Formulation}
When $T\to \infty$ and the length of task queue is not infinite, the total number of the task is close to $\lambda T$. Therefore, the average transmission data per task can be expressed as:
\begin{equation}\label{average data}
 \lim_{T\to\infty}  \frac{1}{\lambda T} \Big\{\sum_{t=0}^{T-1}\sum_{i=1}^{2}u_l^i(t) D_{lt}^i (t)+u_c^i (t) D_{ct}^i (t)\Big\}.
\end{equation}
where $D_{lt}^i(t)$ and $D_{ct}^i (t)$ denote $D_{lt}(t)$ and $D_{ct}(t)$ of the $i$-th task for $i=1,2$, respectively.

From the system model, we know that each task requires transmission time, waiting time and processing time. The computation processing time of the MEC server or the mobile VR device is the dominant influence on the execution delay \cite{Lyapunov}. Based on the Little Law \cite{queue}, the execution delay, including the waiting time and processing time, is proportional to the average queue length of the task buffer. The execution delay is written as:

\begin{equation}\label{average delay}
 \lim_{T\to\infty}  \frac{1}{T} \mathbb{E}\Big[\sum_{t=0}^{T-1}Q(t)\Big].
 \end{equation}

Denote $\pi(t)\triangleq \{u_l^1(t), u_l^2(t),u_c^1(t),u_c^2(t)\}$. Thus, the communication-resource consumption minimization problem is formulated as:
\begin{align}
  &\mathrm{P1}:\min_{\pi(t)}~ \hspace{4mm}  \lim_{T\to\infty}  \frac{1}{\lambda T} \Big\{\sum_{t=1}^{T}\sum_{i=1}^{2}u_l^i(t) D_{lt}^i (t)+u_c^i (t) D_{ct}^i (t)\Big\}  \nonumber\\
  & \hspace{8mm}s.t.                                                                                                                                                                                                                                                                                                                                                                                                                                                                                                                                                                                                                                                                                                                                                                                                                                                                                                                                                                                                                                                                                                                                                                                                                                                                                                                                                                                                                                                                                                                                                                                                                                                                                                                                                                                                                                                                                                                                                                                                                                                                                                                                                                                                                                                                                                                                                                                                                                                                                                                                                                                                                                                                                                                                                                                                                                                                                                                                                                                                                                                                                                                                                                                                                                                                                                                                                                                                                                                                                                                                                                                                                                                                                                                                                                                                                                                                                                                                                                                                                                                                                                                                                                                                                                                                                                                                                                                                                                                                                                                                                                                                                                                                                                                                                                                                                                                                                                                                                                                                                                                                                                                                                                                                                                                                                                                                                                                                                                                                                                                                                                                                                                                                                                                                                                                                                                                                                                                                                                                                                                                                                                                                                                                                                                                                                                                                                                                                                                                                                                                                                                                \hspace{11mm}\pi(t)\in u^k(t), ~~~~k\in \{1, 2, 3, 4, 5\}, \label{u_l^k} \\
  &\hspace{22mm}\lim_{T\to\infty}  \frac{1}{T} \mathbb{E}\Big[\sum_{t=1}^{T}Q(t)\Big]< \infty,\qquad\quad  \label{delayconstraint}
\end{align}
where (\ref{delayconstraint}) indicates the delay constraint to ensure the task requires can be completed with a finite delay. Unfortunately, $\mathrm{P1}$ is a stochastic optimization problem. The system state changes after a offloading decision is made, and $\mathrm{P1}$ is impossible to be solved by convex optimization methods.
\section{Optimal Task Scheduling Algorithm Based On Lyapunov Theory}
\subsection{Problem Transform}
To simplify $\mathrm{P1}$,  we consider Lyapunov optimization theory. We first define the Lyapunov function:
\begin{equation}\label{Lyapunove function}
 L(Q(t))=\frac{1}{2} Q^2(t).
\end{equation}

Consider the initial state $Q(0)=0$, and then we have $L(Q(0))=0$. If the queue is unstable, $L(Q(t))$ is more volatile than $Q(t)$. Thus the expectation of $L(Q(t))$ is:
 \begin{align}\label{Lyapunov expectation}
  &\mathbb{E}[L(Q(t))] = \mathbb{E}\Big\{\sum_{i=0}^{t-1}[L(Q(i+1))-L(Q(i))]\Big\} \nonumber\\
  &\hspace{16mm}=\sum_{i=0}^{t-1}\mathbb{E}\{L(Q(i+1))-L(Q(i))|Q(i)\}.
\end{align}
The system is stable when $\mathbb{E}[L(Q(t))]< \infty$. Therefore the Lyapunov drift function can be given by:
 \begin{equation}\label{Lyapunov drift}
 \Delta L(Q(t))=\mathbb{E}\Big\{L(Q(t+1))-L(Q(t))|Q(t)\Big\}.
\end{equation}

We can see from (\ref{Lyapunov expectation}) and (\ref{Lyapunov drift}) that to maintain the stability of the queue, we should minimize (\ref{Lyapunov drift}) in each time slot. Therefore the expectation of the $L(Q(t))$ would not tend to infinite. As a result, we have the following Lemma 1.
\begin{lemma}
Let us define the scheduling rate $U(t)=u_l^1(t)+u_l^2(t)+u_c^1(t)+u_c^2(t)$. In order to ensure $\mathbb{E}[L(Q(t))]< \infty$, we have:
\begin{equation}
 \Delta L(Q(t))\le C_{max}-Q(t)\mathbb{E}[U(t)|Q(t)],\label{Lyapunov drift2}
\end{equation}

where we use $C_{max}=\frac{1}{2} \{5+2Q(t)A(t) \}$.

\end{lemma}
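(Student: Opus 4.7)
The plan is to follow the standard Neely-style Lyapunov drift argument: start from the queue recursion, square it to expose a quadratic bound on $L(Q(t{+}1))-L(Q(t))$, bound the per-slot constants that arise, and then take the conditional expectation given $Q(t)$.

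Concretely, I would first rewrite the queueing update (\ref{queue}) as $Q(t{+}1) = (Q(t) - U(t))^+ + A(t)$ with $U(t) := u_l^1(t)+u_l^2(t)+u_c^1(t)+u_c^2(t)$. Squaring both sides and using the standard inequalities $((x)^+)^2 \le x^2$ and $(Q(t)-U(t))^+ \le Q(t)$ (which holds since $U(t)\ge 0$), I obtain
\[
Q(t{+}1)^2 \le Q(t)^2 + U(t)^2 + A(t)^2 + 2Q(t)A(t) - 2Q(t)U(t).
\]
Subtracting $Q(t)^2$, dividing by two, and rearranging yields
\[
L(Q(t{+}1)) - L(Q(t)) \le \tfrac{1}{2}\bigl(U(t)^2 + A(t)^2\bigr) + Q(t)A(t) - Q(t)U(t).
\]

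Next, I would bound the two ``per-slot'' random constants. By the case analysis of Section III.A, only the five scheduling vectors $u^1(t),\ldots,u^5(t)$ are feasible; each produces $U(t)\in\{0,1,2\}$, so $U(t)^2\le 4$. Since $A(t)$ is Bernoulli, $A(t)^2 = A(t) \le 1$. Substituting these two bounds into the drift inequality above produces
\[
L(Q(t{+}1)) - L(Q(t)) \le \tfrac{1}{2}\bigl(5 + 2Q(t)A(t)\bigr) - Q(t)U(t),
\]
which matches the asserted $C_{max}$ exactly.

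Finally, taking $\mathbb{E}[\,\cdot\,|Q(t)]$ of both sides turns the left-hand side into $\Delta L(Q(t))$ by the definition in (\ref{Lyapunov drift}) and leaves $C_{max} - Q(t)\mathbb{E}[U(t)|Q(t)]$ on the right, completing the claim. The only care needed is in the squaring step, where one must keep the cross term $-2Q(t)U(t)$ with the correct sign so that the $Q(t)\mathbb{E}[U(t)|Q(t)]$ factor emerges cleanly; beyond tracking this inequality and the two per-slot constants, the derivation is routine, and I expect no substantive obstacle.
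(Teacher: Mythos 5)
Your proposal is correct and follows essentially the same route as the paper's own proof: square the queue recursion to get $Q^2(t{+}1)\le Q^2(t)+U^2(t)+A^2(t)-2Q(t)(U(t)-A(t))$, bound $U^2(t)\le 4$ and $A^2(t)\le 1$ using the five feasible scheduling vectors and the Bernoulli arrival, and take the conditional expectation given $Q(t)$ to recover $\Delta L(Q(t))\le C_{max}-Q(t)\mathbb{E}[U(t)|Q(t)]$. Your write-up even reproduces the paper's mild imprecision of leaving $A(t)$ (rather than its conditional mean $\lambda$) inside $C_{max}$ after the expectation is taken, so there is nothing substantive to distinguish the two arguments.
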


\begin{proof}
Please refer to Appendix \ref{1}.
\end{proof}

According to Lyapunov theory \cite{queue}, when we make the task scheduling decision $\pi(t)$ to minimize $\Delta L(Q(t)$, the queue state $Q(t)$ can also approach a lower length. However, the minimization of $\Delta L(Q(t)$ can not cause the minimization of (\ref{average data}). Thus, we define the Lyapunov drift-plus-penalty function:
\begin{align}
\Delta L(Q(t))+V\mathbb{E}[D(t)|Q(t)]\le &C_{max}-Q(t)\mathbb{E}[U(t)|Q(t)] \nonumber\\
&+V\mathbb{E}[D(t)|Q(t)],\label{drift-plus-penaltyF}
\end{align}
where $V$ is a non-negative control parameter, which denotes that the system is sensitive to the communication cost. When $V=0$, the system is only sensitive to the delay. With the increase of $V$, the Lyapunov drift-plus-penalty becomes more sensitive to the communication cost. Notice that the optimal task scheduling decision $\pi^*(t)$ for minimizing the right side of (\ref{drift-plus-penaltyF}) also minimize $D(t)$ under the queue length stability constraint. Therefore, we can solve $\mathrm{P2}$ in each time slot $t$:
\begin{equation}
 \mathrm{P2}:\min~ \hspace{4mm} -Q(t)U(t)+VD(t)
\end{equation}
\begin{align*}\label{P2}
  & \hspace{8mm}s.t.                                                                                                                                                                                                                                                                                                                                                                                                                                                                                                                                                                                                                                                                                                                                                                                                                                                                                                                                                                                                                                                                                                                                                                                                                                                                                                                                                                                                                                                                                                                                                                                                                                                                                                                                                                                                                                                                                                                                                                                                                                                                                                                                                                                                                                                                                                                                                                                                                                                                                                                                                                                                                                                                                                                                                                                                                                                                                                                                                                                                                                                                                                                                                                                                                                                                                                                                                                                                                                                                                                                                                                                                                                                                                                                                                                                                                                                                                                                                                                                                                                                                                                                                                                                                                                                                                                                                                                                                                                                                                                                                                                                                                                                                                                                                                                                                                                                                                                                                                                                                                                                                                                                                                                                                                                                                                                                                                                                                                                                                                                                                                                                                                                                                                                                                                                                                                                                                                                                                                                                                                                                                                                                                                                                                                                                                                                                                                                                                                                                                                                                                                                                \hspace{10mm}(\ref{u_l^k})~ \text{and} ~(\ref{delayconstraint}). \qquad\quad
\end{align*}

For each time slot $t$, we can obtain $D(t)$ based on $\pi(t)$ and $Q(t)$. Because there are only 5 possible choices for $\pi(t)$, we can solve $\mathrm{P2}$ in each time slot $t$ by an enumeration method. Thus, we propose an optimal task scheduling algorithm based on Lyapunov theory, as shown in Algorithm 1.

\begin{algorithm}[t!]
\caption{Optimal Task Scheduling Algorithm Based On Lyapunov Theory}
\label{Algorithm2}
\begin{algorithmic}[1]
\STATE \textbf{Obtain the queue state $Q(t)$,  mobile device state $S_l(t)$, MEC server state $S_c(t)$ at the beginning of each time slot $t$}.
\STATE \textbf{Find the system case discussed in Section III}.
\STATE \textbf{Obtain the system case $k$}.
\STATE \textbf{Determine $\pi(t)$ by solving}:
\STATE \hspace{5mm}  $min$   \hspace{5mm}  $-Q(t)U(t)+VD(t)$
\STATE \hspace{5mm} $s.t.$   \hspace{6mm} \textbf{$(\ref{u_l^k}), ~and ~(\ref{delayconstraint})$}
\STATE \textbf{Set $t=t+1$ and update $Q(t)$,  $S_l(t)$ ,  $S_c(t)$ according to (\ref{queue}), (\ref{local state}), (\ref{mec state}) respectively}.
\end{algorithmic}
\end{algorithm}

\begin{lemma}\label{optF}
$\mathrm{P1}$ is not equivalent to $\mathrm{P2}$, but if the control parameter $V$ is sufficiently large, the solution of $P1$ is very close to $\mathrm{P2}$. Let $\bar D^{Alg}$ and $\bar D^{Opt}$ be the average transmission data obtained by solving $\mathrm{P2}$ and the optimal value of $P1$, respectively. We then have:
\begin{equation}\label{Cmaxuse}
\bar D^{Opt} \le \bar D^{Alg} \le \frac{5}{2V}+\bar D^{Opt}.
\end{equation}
\end{lemma}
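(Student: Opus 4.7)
The plan is to prove the two inequalities separately. The lower bound $\bar D^{Opt}\le \bar D^{Alg}$ is immediate once we verify that Algorithm~1 produces a policy that is feasible for $\mathrm{P1}$, since $\bar D^{Opt}$ is by definition the infimum over all such policies. Feasibility amounts to checking (\ref{delayconstraint}), i.e., mean-rate queue stability; this will drop out of the same drift analysis used for the upper bound, so the work is really just the right inequality.

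For the upper bound, I would begin from the drift-plus-penalty bound (\ref{drift-plus-penaltyF}), which by Lemma~1 reads
\begin{equation*}
\Delta L(Q(t))+V\mathbb{E}[D(t)\mid Q(t)] \le \tfrac{5}{2} + Q(t)\mathbb{E}[A(t)\mid Q(t)] - Q(t)\mathbb{E}[U(t)\mid Q(t)] + V\mathbb{E}[D(t)\mid Q(t)].
\end{equation*}
The key observation is that Algorithm~1 was defined precisely to minimise the sum $-Q(t)U(t)+VD(t)$ over the feasible set $\{u^1(t),\dots,u^5(t)\}$ in every slot. Hence, for the sequence $\pi^{Alg}(t)$, that sum is bounded above by its value under \emph{any} other feasible decision rule, in particular under a stationary randomised policy $\pi^*$ that is independent of $Q(t)$ and attains the optimum of $\mathrm{P1}$. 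By the standard existence result for $\omega$-only randomised policies in Lyapunov optimisation, such a $\pi^*$ satisfies $\mathbb{E}[D^*(t)]=\bar D^{Opt}$ and $\mathbb{E}[U^*(t)]\ge \lambda=\mathbb{E}[A(t)]$. Substituting this policy into the right-hand side and cancelling the $Q(t)\lambda$ terms yields
\begin{equation*}
\Delta L^{Alg}(Q(t)) + V\mathbb{E}[D^{Alg}(t)\mid Q(t)] \le \tfrac{5}{2} + V\bar D^{Opt}.
\end{equation*}

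Next I would take total expectations, telescope the drift over $t=0,1,\dots,T-1$, use $L(Q(0))=0$ and $L(Q(T))\ge 0$ to discard the residual Lyapunov term, and obtain
\begin{equation*}
V\sum_{t=0}^{T-1}\mathbb{E}[D^{Alg}(t)] \le \tfrac{5}{2}T + VT\,\bar D^{Opt}.
\end{equation*}
Dividing by $VT$ (and, per the normalisation in (\ref{average data}), by $\lambda$) and letting $T\to\infty$ delivers $\bar D^{Alg}\le \tfrac{5}{2V}+\bar D^{Opt}$. The same telescoped inequality, rearranged to isolate $\mathbb{E}[Q(T)^2]$, simultaneously certifies (\ref{delayconstraint}) and thus closes the feasibility gap mentioned at the start.

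The main obstacle is the appeal to an optimal stationary randomised benchmark policy $\pi^*$ whose per-slot expectations can be inserted into the drift bound. The decision set depends on the current state through Cases~1--4, so one must argue that it suffices to compare Algorithm~1 slot-by-slot against a randomisation over the \emph{union} of feasible actions, using the fact that any policy achieving $\bar D^{Opt}$ under constraint (\ref{delayconstraint}) must in particular serve tasks at rate at least $\lambda$. Once this is in place, the remainder is the routine telescoping argument sketched above.
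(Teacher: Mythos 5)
Your proposal is correct and takes essentially the same route as the paper's own proof: the drift-plus-penalty bound of Lemma~1 applied to Algorithm~1, a per-slot comparison against an optimal policy whose expected transmission data equals $\bar D^{Opt}$ and whose service rate is at least $\lambda$, then iterated expectation, telescoping, division by $VT$, and letting $T\to\infty$. If anything you are more careful than the paper, which simply asserts the two benchmark relations (its (\ref{DaleDo}) and (\ref{segearF})) rather than deriving them from a stationary randomized optimal policy, leaves the lower bound $\bar D^{Opt}\le \bar D^{Alg}$ implicit, and never acknowledges the state-dependent action-set subtlety that you flag at the end.
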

\begin{proof}
Please refer to Appendix \ref{2}.
\end{proof}
\subsection{Tradeoffs between Communications, Computing and Caching}
In this subsection, we reveal the tradeoff between the average transmission data per task and the computing ability of the mobile VR device. The tradeoff between the  the average transmission data per task and the cache capacity is also discussed.

\begin{proposition}\label{aaF}
Let $\mathbb{E}[D_{ct}(t)]=\bar D_{ct}$ and $\mathbb{E}[D_{lt}(t)]=\bar D_{lt}$ denote the average transmission data of MEC computation model and local computation model, respectively. And let $\mathbb{E}[D_{lc}(t)]=\bar D_{lc}$ denote the average computation data of local computation model. We thus have
\begin{equation}
 \bar D^{Opt} {\setlength\arraycolsep{0.3pt}=} \\
 \begin{cases}
 \bar D_{ct} -\frac{1}{\lambda \bar N_l}(\bar D_{ct}-\bar D_{lt})  &\frac{1}{\bar N_l}+\frac{1}{\bar N_c} \ge \lambda,\frac{1}{\bar N_l} < \lambda,\\
 \bar D_{lt}   &\frac{1}{\bar N_l} \ge \lambda, \\
 infeasible   &\frac{1}{\bar N_l}+\frac{1}{\bar N_c} < \lambda.\label{Data_localcomputeF}
 \end{cases}
\end{equation}
 where we use $\mathbb{E}[N_c(t)]=\bar N_c$ and $\mathbb{E}[N_l(t)]=\bar N_l$ to denote the exception of the time slots required to complete a task in MEC computation model and local computation model, respectively.
\end{proposition}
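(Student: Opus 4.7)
The plan is to view each computation mode as a renewal server: the local mode clears one task every $\bar N_l$ slots in expectation, and the MEC server clears one every $\bar N_c$, giving long-run service rates $1/\bar N_l$ and $1/\bar N_c$ respectively. Since every arriving task must be dispatched to exactly one of the two modes and arrivals occur at rate $\lambda$, the feasibility boundary is $\lambda \le 1/\bar N_l + 1/\bar N_c$, and the three cases in the proposition correspond to whether the local server alone can absorb the load, whether both servers are needed, or whether even the joint capacity is insufficient.

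First I would dispatch the two easy cases. When $1/\bar N_l + 1/\bar N_c < \lambda$, any admissible policy clears at most $T(1/\bar N_l + 1/\bar N_c) + o(T)$ tasks in $T$ slots while $\lambda T$ arrive in expectation, so the unserved backlog grows linearly, $\frac{1}{T}\mathbb{E}[\sum_{t} Q(t)] \to \infty$, and the delay constraint (\ref{delayconstraint}) is violated; hence the problem is infeasible. When $1/\bar N_l \ge \lambda$, the policy that routes every task to the local server stabilizes the queue (by a standard Loynes/drift argument) and attains $\bar D_{lt}$ per task. Since $D_{lt}(t) \le D_{ct}(t)$ pathwise, because local only transmits the uncached components whereas MEC transmits the whole task, any alternative policy that dispatches a positive fraction of tasks to MEC weakly increases the per-task transmission, so $\bar D^{\mathrm{Opt}} = \bar D_{lt}$.

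The substantive case is the mixed regime $1/\bar N_l < \lambda$ with $1/\bar N_l + 1/\bar N_c \ge \lambda$. The strategy is an exchange argument: if an optimal policy ever routes a task through MEC while the local server is idle and $Q(t)>0$, rerouting that task to the local server weakly reduces the cumulative transmission and leaves the queue stable. It follows that an optimal policy keeps the local server saturated, so in steady state the local mode clears a fraction $\frac{1/\bar N_l}{\lambda} = \frac{1}{\lambda \bar N_l}$ of all arriving tasks while the MEC server handles the remaining $1 - \frac{1}{\lambda \bar N_l}$. The average per-task transmission is then the convex combination $\frac{1}{\lambda \bar N_l}\bar D_{lt} + \bigl(1 - \frac{1}{\lambda \bar N_l}\bigr)\bar D_{ct}$, which rearranges to the displayed formula. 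The main obstacle will be rigorously justifying the steady-state saturation claim: one has to argue that when $\lambda > 1/\bar N_l$ the queue keeps the local server busy almost surely in the long run, so that its realized throughput is exactly $1/\bar N_l$ rather than something smaller caused by residual idle epochs. I would establish this via a renewal-reward argument on the busy periods of the local server, using ergodicity of the Bernoulli arrival process and the boundedness of $N_l(t)$ to pass from per-slot dynamics to the time-averaged fractions above.
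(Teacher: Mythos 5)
The paper never gives a proof of this proposition (it is explicitly skipped ``due to page limitations''), so your proposal can only be judged on its own merits. Its skeleton is the natural one and does recover the stated formula: the capacity count gives infeasibility when $1/\bar N_l+1/\bar N_c<\lambda$; the pathwise bound $D_{lt}(t)\le D_{ct}(t)$ makes the all-local policy optimal when $1/\bar N_l\ge\lambda$; and in the mixed regime the identity $\frac{1}{\lambda\bar N_l}\bar D_{lt}+\bigl(1-\frac{1}{\lambda\bar N_l}\bigr)\bar D_{ct}=\bar D_{ct}-\frac{1}{\lambda\bar N_l}(\bar D_{ct}-\bar D_{lt})$ is exactly (\ref{Data_localcomputeF}). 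However, two steps have genuine gaps. The first is in your passage from ``local clears a fraction $1/(\lambda\bar N_l)$ of tasks'' to the convex combination: this silently assumes that the tasks routed locally are a statistically unbiased sample of all tasks, i.e.\ $\mathbb{E}[D_{lt}(t)\mid\text{served locally}]=\bar D_{lt}$. That holds only if the mode decision is independent of the realized task content. If the scheduler may condition on the task (and the paper's own rule $\mathrm{P2}$ does observe $D(t)$), it can reserve the local server for tasks with large savings $D_{ct}(t)-D_{lt}(t)$: with two equally likely task types of equal $N_l$ but savings $10$ and $1$, routing the high-savings type locally strictly beats the unbiased mixture, so the proposition's equality would fail and your formula would only be an upper bound. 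Correspondingly, your exchange argument is not pathwise valid in that setting: rerouting a low-savings task onto an idle local server occupies it for $N_l$ slots and can displace a later high-savings task, so the swap is not weakly improving; it is neutral-or-improving only in expectation under a content-oblivious restriction (the paper describes the system state as just $(Q(t),S_l(t),S_c(t))$), which you must state explicitly and invoke where you mix.

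The second gap is the saturation property you plan to prove, namely that an optimal stable policy gives the local server realized throughput ``exactly $1/\bar N_l$.'' This is false: any policy satisfying (\ref{delayconstraint}) idles the local server a strictly positive fraction of time, because from any finite queue level there is positive probability of an arrival-free stretch long enough to drain the queue before the next local completion, and the only way to make such idle epochs vanishingly rare is to let the queue level grow without bound, contradicting the finite time-averaged mean queue. Hence every stable policy has per-task cost strictly above the formula, and $\bar D^{Opt}$ in (\ref{Data_localcomputeF}) must be read as an infimum attained only in the limit. The argument should therefore be two-sided: (i) every stable content-oblivious policy has local fraction $\phi\le 1/(\lambda\bar N_l)$, hence average cost $\bar D_{ct}-\phi(\bar D_{ct}-\bar D_{lt})$ at least the claimed value; (ii) a sequence of threshold policies (serve locally whenever free and $Q\ge 1$, dispatch to the MEC server only when $Q\ge Q_0$) keeps the mean queue finite for each $Q_0$ and achieves costs converging to the claimed value as $Q_0\to\infty$. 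This infimum reading is also the one consistent with the $\frac{5}{2V}$ optimality gap in Lemma~\ref{optF}; replacing your renewal-reward ``exact saturation'' step by this limiting argument repairs the proof.
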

\begin{proof}
Due to page limitations, we skip the proof here.
\end{proof}

According to Lemma \ref{optF}, with large $V$, $\bar D^{Alg}$ approaches $\bar D^{opt}$, and $\bar D^{Alg} \le \frac{5}{2V}+\bar D^{Opt}$ based on (\ref{Cmaxuse}). Because $\bar N_l=\mathbb{E}[N_l(t)]=\mathbb{E}[\lceil D_{lc}(t)W/(f_l \Delta)+ D_{lt}(t)/R\rceil]$, for large $V$, we can observe from (\ref{Data_localcomputeF}) that $\bar D^{Alg}$ or $\bar D^{opt}$ decreases with the increase of $f_{l}$ when $\frac{1}{\bar N_l}+\frac{1}{\bar N_c} \ge \lambda,\frac{1}{\bar N_l} < \lambda$.

\begin{proposition}\label{cache_size}
Define $K=\mathbb{E}[K_t]$, and then we have
 \begin{align}
 &\mathbb{E}[D_{ct}(t)]=\bar D_{ct}=\tau K,  \label{Dct} \\
 &\mathbb{E}[D_{lt}(t)]=\tau \sum\limits_{n=M+1}^{N}\sum\limits_{k} 1-(1-p_n)^k\Pr(K_t=k). \label{Dlt}
 \end{align}
\end{proposition}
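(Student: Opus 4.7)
The plan is to prove the two identities separately, treating (\ref{Dct}) as a direct consequence of the size definition and (\ref{Dlt}) as an indicator-expectation calculation based on the popularity distribution $\mathbf{p}$.

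For (\ref{Dct}), I would simply recall from the Task Model that $D(t) = \tau K_t$ and that, in MEC computation mode, the entire task is delivered so $D_{ct}(t) = D(t) = \tau K_t$. Taking expectations and using $K = \mathbb{E}[K_t]$ gives $\mathbb{E}[D_{ct}(t)] = \tau K$. No subtlety here.

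For (\ref{Dlt}), I would start from the definition $D_{lt}(t) = \tau \sum_{n=M+1}^{N} G_n^t$ and push the expectation inside the finite sum to obtain $\mathbb{E}[D_{lt}(t)] = \tau \sum_{n=M+1}^{N} \Pr(G_n^t = 1)$. The main step is then to compute $\Pr(G_n^t = 1)$ for $n \ge M+1$, i.e., the probability that content $F_n$ appears at least once among the $K_t$ contents of the task. I would condition on $K_t = k$ and invoke the modeling assumption that each of the $k$ content slots is drawn independently according to the popularity distribution $\mathbf{p}$ (this is the interpretation made implicit by the task model, where one content may appear multiple times in a task). Under that assumption, the probability that $F_n$ is \emph{absent} from all $k$ draws is $(1-p_n)^k$, so $\Pr(G_n^t = 1 \mid K_t = k) = 1 - (1-p_n)^k$. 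Marginalizing over $K_t$ yields
\begin{equation*}
\Pr(G_n^t = 1) = \sum_{k} \bigl(1 - (1-p_n)^k\bigr)\Pr(K_t = k),
\end{equation*}
and substituting back gives exactly (\ref{Dlt}).

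The only subtle point, and therefore the main obstacle, is justifying the i.i.d.\ sampling interpretation of the contents within a task of random size $K_t$; once that modeling assumption is stated explicitly, the complement-probability computation $(1-p_n)^k$ is standard, and the rest is linearity of expectation on a finite index set $\{M+1,\dots,N\}$. No exchange of limits or measure-theoretic care is required because both sums are finite (for each fixed $k$) or absolutely convergent (for the sum over $k$, since each summand is bounded by $\Pr(K_t=k)$).
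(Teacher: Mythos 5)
Your proposal is correct. Note that the paper itself omits the proof of this proposition (``due to page limitations''), so there is no published argument to compare against; your derivation is the natural one that fills this gap: equation (\ref{Dct}) follows immediately from $D_{ct}(t)=D(t)=\tau K_t$ and linearity of expectation, and equation (\ref{Dlt}) follows from writing $D_{lt}(t)=\tau\sum_{n=M+1}^{N}G_n^t$, using $\mathbb{E}[G_n^t]=\Pr(G_n^t=1)$, and computing the complement probability $(1-p_n)^k$ conditioned on $K_t=k$. You also correctly isolate the one genuine modeling assumption the formula requires, namely that the $K_t$ content slots are drawn independently (not merely identically) according to $\mathbf{p}$; the paper only states that the popularity distribution is ``identical among all elements of a task,'' so making the independence assumption explicit, as you do, is exactly the right thing to flag, since without it the factorization $\Pr(F_n \text{ absent}\mid K_t=k)=(1-p_n)^k$ would fail.
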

\begin{proof}
Due to page limitations, we skip the proof here.
\end{proof}
%
%
%

From proposition 2, we can see that $ \bar D_{lt}$ decreases with the increase of the caching capacity $M$ in the mobile VR device. And we can know that $\bar D^{Opt}$ decreases with the decrease of $D_{lt}$ based on Proposition 1. Hence the increase of $M$ can decrease $D^{Opt}$. Similar to proposition 1, with large $V$, $\bar D^{Alg}$ approaches $\bar D^{opt}$, and $\bar D^{Alg}$ decrease with the increase of $M$.

\section{Numerical Results}
In this section,  we evaluate the performance of the proposed optimal scheduling policy by simulations. The number of contents is $N=1000$, the time slot is $\Delta=0.2~s$, the length of each content is $\tau=5~Mbits$, and the average arrival rate is $\lambda=0.4$. The mobile VR device has $f_l=1~GHz$ CPU frequency, the cache capacity $M=50$ and transmission throughput $R=500Mbps$, when the MEC server has $f_c=4\times2.5~GHz$ CPU frequency unless otherwise specified. $K_{t}$ is distributed uniformly in $[40,$ $60]$. We assume the content popularity distribution is identical among all elements of a task, which follows the Zipf distribution with the parameter $\alpha$. We set $\alpha =0.8$ in simulations. We consider the \emph{MEC computation policy} as baselines, which executes all the tasks in the MEC server.

\begin{figure}
\centering
\includegraphics[width=3.1in,height=2.2in]{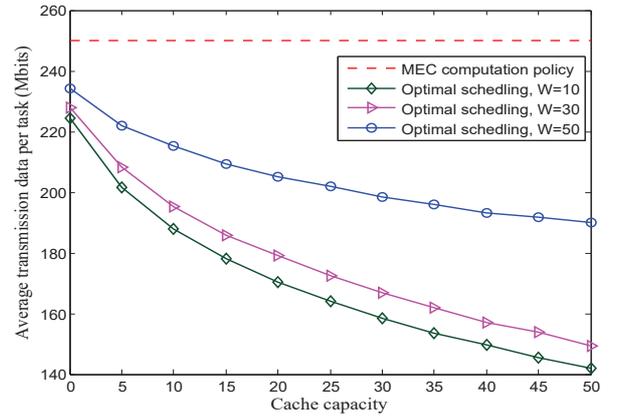}
\vspace{-2mm}
\caption{The communication-resource consumption vs. cache capacity.}\label{SRvsNtwithdiffLI} \label{cachep}
\label{cache2}
\vspace{-2mm}
\end{figure}

Fig.\ref{cache2} shows that the average transmission data per task achieved by the proposed optimal scheduling policy \emph{decreases} with the cache capacity. That means the average communication-resource consumption can be traded off by the cache capacity to keep the queue length stable, which verifies the tradeoff presented by Proposition 2. Moreover, the scheduling policy always outperforms the MEC computation policy even when there are no contents cached in the mobile VR device. This is because the optimal scheduling always executes a part of tasks by local computation policy, and the redundant transmission of contents needed in those tasks can be avoided. However, the MEC computation policy delivers all the contents requested by each task.

\begin{figure}
\centering
\includegraphics[width=3.1in,height=2.2in]{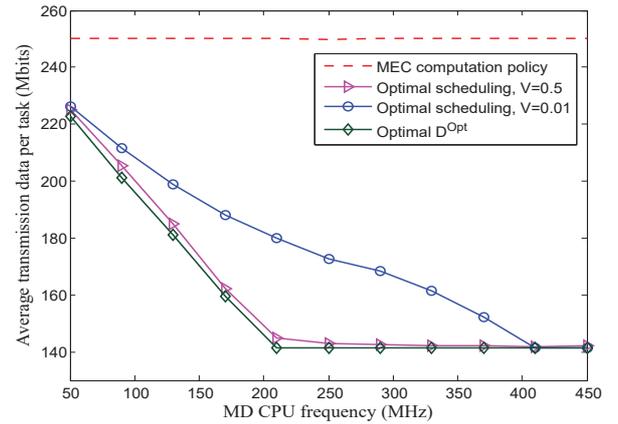}
\vspace{-2mm}
\caption{The communication-resource consumption vs. the computing ability of mobile VR device. } \label{Cost}
\vspace{-4.5mm}
\end{figure}

Fig.\ref{Cost} presents the average transmission data per task versus the mobile VR device computing ability $f_l$. The increase of the mobile VR device computing ability $f_l$ \emph{decrease} the average communication-resource consumption. The reason for it is that the increase of $f_l$ decreases the computing delay of local execution, and more tasks will be executed by local computation policy. With large $V$, the system is more sensitive to the communication-resource consumption, and more tasks are scheduled to mobile VR device. And $D^{Alg}$ is close to $D^{Opt}$ when $V$ is sufficiently large, which verify the lemma 2. Further, when $f_l$ is sufficiently large, the local computation mode becomes the optimal schedule strategy.
\begin{figure}
\centering
\includegraphics[width=3.1in,height=2.2in]{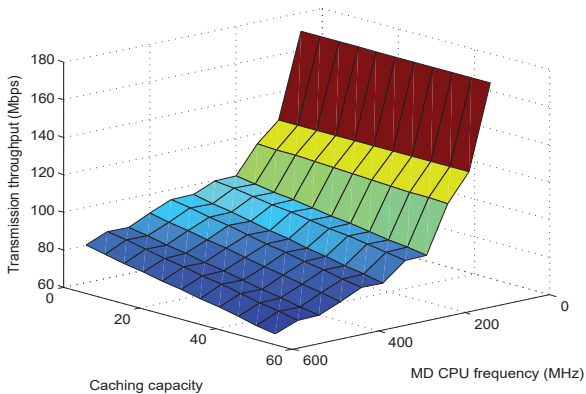}
\vspace{-2mm}
\caption{The incorporation of communications, computing and caching to achieve a target delay $20~ms$. Here, we set $\lambda =0.2$. }\label{SRvsNtwithdiffLI}
\label{3C}
\vspace{1.5mm}
\end{figure}

 \textbf{Tradeoff between Communications, Computing and Caching (3C)}: In Fig. \ref{3C}, the average end-to-end latency is $20~ ms$ of the proposed system with different 3C resources allocation, e.g., $\{R,f_l,M\}$=$\{125~Mbps, 100~MHz, 5\}$ or $\{R,f_l,M\}$=$\{67~Mbps, 500~MHz, 45\}$. As one can see, the communication throughput $R$ decreases with increasing computing capability $f_l$ and caching capacity $M$. As $f_l$ increases, more task be scheduled to the mobile VR device, yielding lower the communication cost. $M$ is similar to $f_l$. Therefore, the system needs a smaller $R$. We observe that the computing ability has more impact on the communication-resource consumption than that of the caching capacity. This is because the cache capacity only has impact on the performance of local computation mode. On the other hand, when the system has small computing ability and caching capability, the large transmission throughput is required. This result reconfirm the importance of simultaneous exploitation of all 3Cs in wireless networks \cite{3C}.

\

\section{Conclusion}
In this paper, we investigated the communication-constrained mobile edge computing systems for wireless virtual reality. A transmission data consumption minimization problem with the execution delay constraints was formulated, and we proposed a task scheduling strategy based on Lapunov theory. The tradeoffs between communications, computing, and caching in the proposed system was also dicussed. Finally Simulation results shown that the proposed scheduling strategy achieve a significant reduction in the average transmission data consumption.  
\appendix
\subsection{Proof of Lemma 1}\label{1}
According to (\ref{queue}), we first have
 \begin{equation}\label{Q(t+1)}
Q^2(t+1)\le Q^2(t)+U^2(t)+A^2(t)-2Q(t)(U(t)-A(t)).
\end{equation}
Substituting (\ref{Q(t+1)}) into (\ref{Lyapunov drift}), then (\ref{Lyapunov drift}) can be rewritten as
\begin{align}
\Delta L(Q(t)) &\le \frac{1}{2} \mathbb{E}[U^2(t)+A^2(t)|Q(t)]\nonumber \\
   &-Q(t)\mathbb{E}[(U(t)-A(t))|Q(t)]\label{Lyapunov drift3}.
\end{align}
The task arriving rate $A(t)$ is independent of $Q(t)$. Finally, according to $U(t)\leq2$ and $A(t)\leq1$, we have Lemma 1.
\subsection{Proof of Lemma 2}\label{2}
We assume $P1$ is feasible, and there exists at least one $\pi^*(t)$ for satisfying the constraints of $P1$. $\bar D^{Alg}$ and $\bar D^{Opt}$ satisfies the following condition:
 \begin{equation}\label{DaleDo}
\mathbb{E}[D(t)|Q(t)]=\bar D^{Alg} \leq \bar D^{Opt}+ \gamma,
\end{equation}
where $\gamma$ is a positive value. According to Little Theorem \cite{queue},  if the average arriving rate is larger than the average service rate, the queue length tends to infinite with the increase of time slot $t$. Therefore, if $P2$ can be solved by proposed algorithm, the following condition should be satisfied
\begin{equation}\label{segearF}
\mathbb{E}[U(t)|Q(t)]= \lambda+\theta,
\end{equation}
where $\theta$ is a positive value. Substituting (\ref{DaleDo}) and (\ref{segearF}) into (\ref{drift-plus-penaltyF}), and with $\gamma \to 0$, we obtain:
\begin{equation}\label{drift-plus-penalty2}
 \Delta L(Q(t))+V\{D^{Alg}(t)|Q(t)\}\le \frac{5}{2}+V\bar D^{opt}.
\end{equation}

Then taking iterated expectation and using the telescoping sums over $t \in \{1......T\}$,  we get
\begin{align}\label{drift-plus-penalty2}
 &\mathbb{E}[L(Q(T))]-\mathbb{E}[L(Q(1))]+V\sum_{t=1}^{T}\mathbb{E}[D^{Alg}(t)|Q(t)]\nonumber\\
  &\leq T(\frac{5}{2}+V\bar D^{opt}).
\end{align}

We divide (\ref{drift-plus-penalty2}) with $VT$ and let $T\to \infty$, then we have:

\begin{equation}\label{Cmax}
\bar D^{Alg} \le \frac{5}{2V}+\bar D^{Opt}.
\end{equation}

\bibliographystyle{IEEEtran}
\bibliography{paper}

\end{document}